\let\svthefootnote\thefootnote
\newcommand\freefootnote[1]{%
  \let\thefootnote\relax%
  \footnotetext{#1}%
  \let\thefootnote\svthefootnote%
}
\DeclareMathOperator*{\argmax}{arg\,max}
\begin{document}

\title{Comparing Neural Network Encodings for Logic-based Explainability}

\titlerunning{Comparing Neural Network Encodings for Logic-based Explainability}
%

%
\author{Levi Cordeiro Carvalho \and Saulo A. F. Oliveira \and
Thiago Alves Rocha \Letter}


\authorrunning{L. Carvalho, S. Oliveira and T. Rocha}

%

\institute{Instituto Federal do Ceará (IFCE), Brazil\\
\email{levi.carvalho@ifce.edu.br}\\
\email{saulo.oliveira@ifce.edu.br}\\
\email{thiago.alves@ifce.edu.br} \Letter}

\maketitle              
\begin{abstract}
Providing explanations for the outputs of artificial neural networks (ANNs) is crucial in many contexts, such as critical systems, data protection laws and handling adversarial examples. Logic-based methods can offer explanations with correctness guarantees, but face scalability challenges. Due to these issues, it is necessary to compare different encodings of ANNs into logical constraints, which are used in logic-based explainability. This work compares two encodings of ANNs: one has been used in the literature to provide explanations, while the other will be adapted for our context of explainability. Additionally, the second encoding uses fewer variables and constraints, thus, potentially enhancing efficiency. Experiments showed similar running times for computing explanations, but the adapted encoding performed up to 18\% better in building logical constraints and up to 16\% better in overall time.

\keywords{Artificial Neural Networks  \and Explainable Artificial Intelligence \and Logic-based Explainable AI.}
\end{abstract}
\freefootnote{This preprint has not undergone peer review or any post-submission improvements or corrections. The Version of Record of this contribution is 
published in Intelligent Systems, LNCS, vol 15412 and is available online at \url{https://doi.org/10.1007/978-3-031-79029-4_20}.}
\section{Introduction}\label{introduction}

Artificial neural networks (ANNs) are widely applied in tasks like computer vision, speech recognition, and pattern recognition \cite{applications}. Despite their success, ANNs are often considered black-box algorithms. Such a lack of interpretability poses risks in critical domains such as medical and financial applications, where understanding model decisions is crucial. Additionally, the presence of adversarial examples highlights the need for explainability in machine learning algorithms, including neural networks. An adversarial example is an instance misclassified by a machine learning model and also slightly different from another correctly classified instance \cite{adversarial_examples}.

In this work, an explanation for a prediction made by an ANN is a subset of features and their values that alone suffice for the prediction. If an instance has the features in this subset, the ANN makes the same prediction,
regardless of the values of other features. For example, given an instance $\{sneeze=True$, $weight=70 \ kg$, $headache=True$, $age=40 \ years\}$ and its ANN output \textit{flu}, a possible explanation could be $\{sneeze=True, headache=True\}$. That is, if an instance has the features $sneeze=True$ and $headache=True$, the ANN prediction is \textit{flu}, regardless of $weight$ and $age$ values. A minimal explanation avoids redundancy by including only essential information. An explanation is considered minimal when removing any feature results in the loss of assurance that every instance satisfying the explanation maintains the same output. Then, a minimal explanation avoids redundancy, providing only essential information.

Heuristic methods, such as ANCHOR \cite{ANCHOR} and LIME \cite{LIME}, have been used to provide explanations for machine learning models. However, these approaches explore the instance space locally, not resulting in explanations that have minimal sizes and formal guarantees of correctness. Correctness guarantees are provided when there are no instances with the values specified in the explanation such that the ANN makes a different prediction. Moreover, minimal explanations are desired since they do not contain redundancy, making them easier to understand and interpret.

Some approaches aim to provide explanations for machine learning models with formal guarantees of correctness \cite{shi,Ignatiev_abduction,choi,gorji2022sufficient,audemard23a,wuVerix23}. \citet{Ignatiev_abduction} proposed a logic-based algorithm that gives minimal and correct explanations for ANNs, utilizing logical constraints originally designed for finding adversarial examples \citet{fischetti}. These constraints include linear equations, inequalities, and logical implications, solved using a Mixed Integer Linear Programming (MILP) solver. However, scalability issues arise, particularly with large ANNs, necessitating further development before deployment in large-scale production environments.

This work explores two different encodings to improve the scalability of providing correct minimal explanations for ANNs, building upon \cite{Ignatiev_abduction}. In addition to the logical constraints of \cite{fischetti}, we adopt the encoding proposed by \citet{tjeng}, which uses fewer variables and constraints, and excludes logical implications. By reducing variables and constraints compared to \cite{fischetti}, our approach aims to enhance explanation computation performance. To adapt the approach of \cite{tjeng} for explanations, we introduce new constraints to ensure correctness. In line with the encodings proposed by \citet{fischetti} and \citet{tjeng}, we also compute lower and upper bounds for each neuron. These bounds are found through optimization using a MILP solver. Moreover, these bounds can aid the solver in computing explanations more rapidly. In this manner, we compare the time required for constructing logical constraints with lower and upper bounds of each neuron, along with the time needed for computing explanations.

We conducted experiments to evaluate both encodings. Our adaptation of the encoding proposed in \cite{tjeng} exhibits a better running time in building encodings for ANNs with two layers and tens of neurons, showing an improvement of up to 18\%. Surprisingly, both methods exhibit similar running times for computing explanations. Furthermore, our adaptation outperforms the other encoding in the overall time, encompassing both building logical constraints and computing explanations. In this case, the results indicate an improvement of up to 16\%. In summary, our main contributions are described in the following:

\begin{itemize}
    \item Adaptation of the encoding proposed in \cite{tjeng} to provide explanations for ANNs; additional constraints were incorporated to address the problem of computing explanations.
    
    \item Comparative analysis of the running time for building the logical constraints between the two approaches. Additionally, we analyze the time for generating explanations using both encodings.

    \item Publicly available implementations of both encodings for finding explanations for ANNs\footnote{\url{https://github.com/LeviCC8/Explications-ANNs}}.
    
\end{itemize}

In the next section, we review some concepts and terminologies about Logic, MILP and ANNs. Sections \ref{explanations} and \ref{encoding} show how to compute explanations with and without implications, respectively. Section \ref{encoding} describes our adaptation of the encoding proposed in \cite{tjeng}. Experiments and results are presented in Section \ref{experiments}. Finally, conclusions and future work are described in Section \ref{conclusions}.

\section{Background}\label{background}
In this section, we introduce some initial concepts and terminology to understand the rest of this work.

\subsection{First-order Logic over LRA}\label{subsec:logic}
In this work, we use first-order logic (FOL) to give explanations with guarantees of correctness. We use quantifier-free first-order formulas over the theory of linear real arithmetic (LRA). Then, first-order variables are allowed to take values from the real numbers $\mathbb{R}$. For details, see \cite{kroening2016decision}. Therefore, we consider formulas as defined below:
\begin{equation}
        \begin{aligned}
             F, G &:= s \mid (F \wedge G) \mid (F \vee G) \mid (\neg F) \mid (F \to G),\\
             p &:= \sum^n_{i=1} w_i x_i \leq b \mid \sum^n_{i=1} w_i x_i < b,
        \end{aligned}    
\end{equation}
such that $F$ and $G$ are quantifier-free first-order formulas over the theory of linear real arithmetic. Moreover, $p$ represents the atomic formulas such that $n \geq 1$, each $w_i$ and $b$ are fixed real numbers, and each $x_i$ is a first-order variable. Observe that we allow the use of other letters for variables instead of $x_i$, such as $s_i$, $z_i$, $q_i$. For example, $(2.5x_1 + 3.1x_2 \geq 6) \wedge (x_1=1 \vee x_1=2) \wedge (x_1=2 \to x_2 \leq 1.1)$ is a formula by this definition. Observe that we allow standard abbreviations as $\neg (2.5x_1 + 3.1x_2 < 6)$ for $2.5x_1 + 3.1x_2 \geq 6$.

Since we are assuming the semantics of formulas over the domain of real numbers, an \textit{assignment} $\mathcal{A}$ for a formula $F$ is a mapping from the first-order variables of $F$ to elements in the domain of real numbers. For instance, $\{x_1 \mapsto 2.3, x_2 \mapsto 1\}$ is an assignment for $(2.5x_1 + 3.1x_2 \geq 6) \wedge (x_1=1 \vee x_1=2) \wedge (x_1=2 \to x_2 \leq 1.1)$. An assignment $\mathcal{A}$ \textit{satisfies} a formula $F$ if $F$ is true under this assignment. For example, $\{x_1 \mapsto 2, x_2 \mapsto 1.05\}$ satisfies the formula in the above example, whereas $\{x_1 \mapsto 2.3, x_2 \mapsto 1\}$ does not satisfy it.

A formula $F$ is \textit{satisfiable} if there exists a satisfying assignment of $F$. To give an example, the formula in the above example is satisfiable since $\{x_1 \mapsto 2, x_2 \mapsto 1.05\}$ satisfies it. As another example, the formula $(x_1 \geq 2) \wedge (x_1 < 1)$ is unsatisfiable since no assignment satisfies it. Given formulas $F$ and $G$, the notation $F \models G$ is used to denote \textit{logical consequence} or \textit{entailment}, i.e., each assignment that satisfies $F$ also satisfies $G$. As an illustrative example, let $F = (x_1 = 2 \wedge x_2 \geq 1)$ and $G = (2.5x_1 + x_2 \geq 5) \wedge (x_1=1 \vee x_1=2)$. Then, $F \models G$. The essence of entailment lies in ensuring the correctness of the conclusion $G$ based on the given premise $F$. In the context of computing explanations, as presented in \cite{Ignatiev_abduction}, logical consequence serves as a fundamental tool for guaranteeing the correctness of predictions made by ANNs. Therefore, our adaptation of the encoding proposed by \citet{tjeng} also incorporates the principles of entailment for computing explanations.

The relationship between satisfiability and entailment is a fundamental aspect of logic. It is widely known that, for all formulas $F$ and $G$, it holds that $F \models G$ iff $F \wedge \neg G$ is unsatisfiable. For instance, $( x_1 = 2 \wedge x_2 \geq 1) \wedge \neg((2.5x_1 + x_2 \geq 5) \wedge (x_1=1 \vee x_1=2))$ has no satisfying assignment since an assignment that satisfies $(x_1 = 2 \wedge x_2 \geq 1)$ also satisfies $(2.5x_1 + x_2 \geq 5) \wedge (x_1=1 \vee x_1=2)$ and, therefore, does not satisfy $\neg((2.5x_1 + x_2 \geq 5) \wedge (x_1=1 \vee x_1=2))$. Since our approach builds upon the concept of logical consequence, we can leverage this connection in the context of computing explanations for ANNs.

\subsection{Mixed Integer Linear Programming}
In Mixed Integer Linear Programming (MILP), the objective is to optimize a linear function subject to linear constraints, where some or all of the variables are required to be integers \cite{milp}. MILP is a crucial technique in our work for determining the lower and upper bounds of each neuron in the ANNs. For example, we utilize a minimization problem to determine the lower bound of neurons within ANNs. This process involves formulating an objective function that seeks to minimize the lower bound, subject to constraints that reflect the behaviour of ANNs. To illustrate the structure of a MILP, we provide an example below:
\begin{equation}
\label{eq:milp}
\begin{aligned}
\min \quad & y_1 \\
\textrm{s.t.} \quad & 1 \leq x_1 \leq 3\\
                    & 3x_1 + s_1 - 2 = y_1 \\
                    & 0 \leq y_1 \leq 3x_1 - 2 \\
                    & 0 \leq s_1 \leq 3x_1 - 2\\
                    & z_1 = 1 \to y_1 \leq 0 \\
                    & z_1 = 0 \to s_1 \leq 0 \\
                    & z_1 \in \{0, 1\}
\end{aligned}
\end{equation}

In the MILP in (\ref{eq:milp}), we want to ﬁnd values for variables $x_1, y_1, s_1, z_1$ minimizing the value of the objective function $y_1$ among all values that satisfy the constraints. Variable $z_1$ is binary since $z_1 \in \{0, 1\}$ is a constraint in the MILP, while variables $x_1, y_1, s_1$ have the real numbers $\mathbb{R}$ as their domain. The constraints in a MILP may appear as linear equations, linear inequalities, and indicator constraints. Indicator constraints can be seen as logical implications of the form $z = v \to \sum^n_{i=1} w_i x_i \leq b$ such that $z$ is a binary variable, $v$ is a constant $0$ or $1$ \cite{bigm}.

An important observation is that a MILP problem without an objective function corresponds to a satisfiability problem, as discussed in Section~\ref{subsec:logic}. Given that the approach for computing explanations relies on logical consequence, and considering the connection between satisfiability and logical consequence, we employ a MILP solver to address explanation tasks. Additionally, throughout the construction of the MILP model, we utilize optimization, specifically employing a MILP solver, to determine tight lower and upper bounds for the neurons of ANNs.

\subsection{Classification Problems and Artificial Neural Networks}\label{ANN}

In machine learning, classification problems are defined over a set of $n$ features $\mathcal{F} = \{x_1, ..., x_n\}$ and a set of $\mathcal{N}$ classes $\mathcal{K} = \{c_1, c_2,...,c_\mathcal{N}\}$. In this work, we consider that each feature $x_i \in \mathcal{F}$ takes its values $v_i$ from the domain of real numbers. Moreover, each feature $x_i$ has an upper bound $u_i$ and a lower bound $l_i$ such that $l_i \leq x_i \leq u_i$, and its domain is the closed interval $[l_i, u_i]$. This is represented as a set of domain constraints or feature space $D = \{l_1 \leq x_1 \leq u_1,\text{ }l_2 \leq x_2 \leq u_2, ..., l_n \leq x_n \leq u_n \}$. For example, a feature for the height of a person belongs to the real numbers and may have lower and upper bounds of $0.5$ and $2.1$ meters, respectively. Furthermore, $\{x_1 = v_1, x_2 = v_2, ..., x_n = v_n\}$ represents a specific point or instance of the feature space such that each $v_i$ is in the domain of $x_i$.

An ANN is a function that maps elements in the feature space into the set of classes $\mathcal{K}$. A feedforward ANN is composed of $L+1$ layers of neurons. Each layer $l \in \{0, 1, ..., L\}$ is composed of $n_l$ neurons, numbered from $1$ to $n_l$. Layer $0$ is fictitious and corresponds to the input of the ANN, while the last layer, $K$ corresponds to its outputs. Layers $1$ to $L-1$ are typically referred to as hidden layers. Let $x^l_i$ be the output of the $i$th neuron of the $l$th layer, with $i \in \{1,...,n_l\}$. The inputs to the ANN can be represented as $x^0_i$ or simply $x_i$. Moreover, we represent the outputs as $x^L_i$ or simply $o_i$.

The values $x^l_i$ of the neurons in a given layer $l$ are computed through the output values $x^{l-1}_j$ of the previous layer, with $j \in \{1,...,n_{l-1}\}$. Each neuron applies a linear combination of the output of the neurons in the previous layer. Then, the neuron applies a nonlinear function, also known as an activation function. The output of the linear part is represented as $\sum_{j=1}^{n_{l-1}} w^{l}_{i,j} x^{l-1}_{j} + b^{l}_{i}$ where $w^{l}_{i,j}$ and $b^{l}_{i}$ denote the weights and bias, respectively, serving as parameters of the $i$th neuron of layer $l$. In this work, we consider only feedforward ANNs with the Rectified Linear Unit ($\mathrm{ReLU}$) as activation function because it can be represented by linear constraints due to its piecewise-linear nature. This function is a widely used activation whose output is the maximum between its input value and zero. Then, $x^{l}_{i} = \mathrm{ReLU}(\sum_{j=1}^{n_{l-1}} w^{l}_{i,j} x^{l-1}_{j} + b^{l}_{i})$ is the output of the $\mathrm{ReLU}$.

For classification tasks, the last layer $L$ is composed of $n_L = \mathcal{N}$ neurons, one for each class. Moreover, it is common to normalize the output layer using a Softmax layer. Consequently, these values represent the probabilities associated with each class. The class with the highest probability is chosen as the predicted class. However, we do not need to consider this normalization transformation as it does not change the maximum value of the last layer. Thus, the predicted class is $c_i \in \mathcal{K}$ such that $i = \argmax_{j \in \{1, ..., \mathcal{N}\}} x^L_j$.

\section{Explanations for ANNs with Logical Implications}\label{explanations}


\citet{Ignatiev_abduction} proposed an algorithm that computes minimal explanations for ANNs, yielding a subset of the input features sufficient for the prediction. This approach is based on logic with guarantees on the correctness and minimality of explanations. A flowchart for computing explanations using such an algorithm is shown in Figure \ref{fig:fluxograma}.

\begin{figure}
\centering
\includegraphics[scale=0.6]{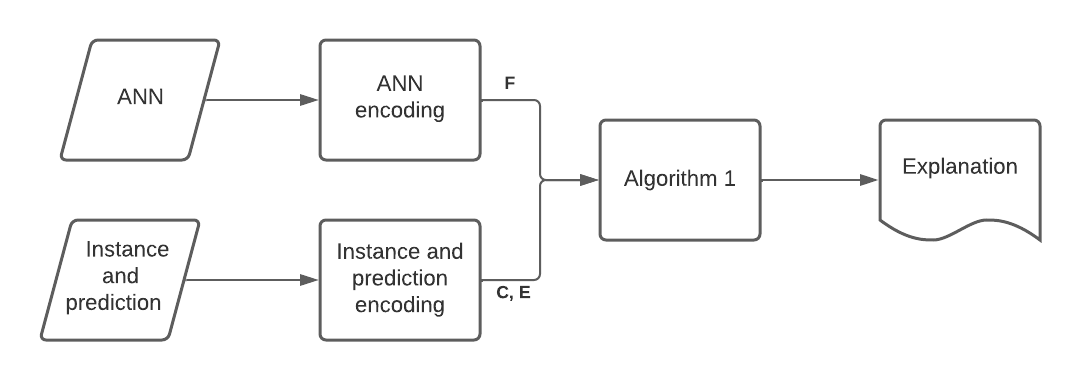}
\caption{Flowchart for calculating explanations.}\label{fig:fluxograma}
\end{figure}
First, the ANN and the feature space $\{l_1 \leq x_1 \leq u_1,\text{ }l_2 \leq x_2 \leq u_2, ..., l_n \leq x_n \leq u_n \}$ are encoded as a formula $F$, an instance $\{x_1 = v_1, x_2 = v_2, ..., x_n = v_n\}$ of the feature space is encoded as a conjunction in a formula $C$, and the associated prediction by the ANN is encoded as a formula $E$. Then, it holds that $C \wedge F \models E$. The minimal explanation $C_m$ of $C$ is calculated removing feature by feature from $C$. For example, given a feature $x_i$ with value $v$ in $C$, if $C \setminus \{x_i=v\} \wedge F \models E$, feature $x_i$ may be considered as irrelevant in the explanation and is removed from $C$. Otherwise, if $C \setminus \{x_i=v\} \wedge F \not\models E$, then $x_i$ is kept in $C$ since the same class cannot be guaranteed. This $C \setminus \{x_i=v\}$ notation represents the removal of $x_i=v$ from formula $C$. This process is described in Algorithm \ref{algorithm1} and is performed for all features. Then, $C_m$ is the result at the end of this procedure. This means that for the values of the features in $C_m$, the ANN makes the same classification, whatever the values of the remaining features. Since to check entailments $C \wedge F \models E$ is equivalent to test whether $C \wedge F \wedge \neg E$ is unsatisfiable and $F$, $C$ and $\neg E$ are enconded as linear constraints and indicator constraints, such a entailment can be addressed by a MILP solver.
\begin{algorithm}
\caption{Computing a minimal explanation} \label{algorithm1}
\begin{algorithmic}[1]
\Require{ANN and domain constraints $F$, input data $C$, prediction $E$}
\Ensure{minimal explanation $C_m$}

\For{$x_i=v$ \text{ in } $C$}
\If{$C \setminus \{x_i=v\} \wedge F \models E$}
\State $C \gets C \backslash \{x_i=v\}$\
\EndIf
\EndFor
\State $C_m \gets C$\;
\State \Return $C_m$\;
\end{algorithmic}
\end{algorithm}

The encoding of ANNs used in \cite{Ignatiev_abduction} and originally proposed by \citet{fischetti} uses implications to represent the behavior of the $\mathrm{ReLU}$ activation function. We encode an ANN with $L+1$ layers as in Equations (\ref{eq:indicator1})-(\ref{eq:indicator2}). In the following, we explain the notation. The encoding uses variables $x^{l}_{i}$ and $o_n$ with the same meaning as in the notation for ANNs. Auxiliary variables $s^{l}_{i}$ and $z^{l}_{i}$ control the behaviour of $\mathrm{ReLU}$ activations. Variable $z^{l}_{i}$ is binary and if $z^{l}_{i}$ is equal to $1$, the $\mathrm{ReLU}$ output $x^{l}_{i}$ is $0$ and $- s^{l}_{i}$ is equal to the linear part. Otherwise, the output $x^{l}_{i}$ is equal to the linear part and $s^{l}_{i}$ is equal to $0$. The constant $ub^{l}_{s,i}$ is the upper bound of variable $s^{l}_{i}$, and the constant $ub^{l}_{x,i}$ is the upper bound of variable $x^{l}_{i}$. Each variable $x^{0}_{i}$ has also lower and upper bounds $l_i$, $u_i$, respectively, defined by the domain of the features. 
\newcommand{\pd}[2]{\frac{\partial{#1}}{\partial{#2}}}
\begin{eqnarray}
     \left.
         \begin{aligned}
         &\sum_{j=1}^{n_{l-1}} w^{l}_{i,j} x^{l-1}_i + b^{l}_{i} = x^{l}_{i} - s^{l}_{i} \label{eq:indicator1}\\
         &z^{l}_{i} = 1 \rightarrow x^{l}_{n} \leq 0 \\
         &z^{l}_{i} = 0 \rightarrow s^{l}_{n} \leq 0 \\
         &z^{l}_{i} \in \{0, 1\} \\
         &0 \leq x^{l}_{i} \leq ub^{l}_{x,i} \\
         &0 \leq s^{l}_{i} \leq ub^{l}_{s,i} \\
         \end{aligned}
     \right\}l = 1, ..., L-1, \ i = 1, ..., n_l
\end{eqnarray}
\begin{align}
     &l_i \leq x_{i} \leq u_i,\quad i = 1, ..., n_0 \label{eq:input}\\
     &o_i = \sum_{j=1}^{n_{L-1}} w^{L}_{i,j} x^{L-1}_i + b^{L}_{i},\quad i = 1, ..., n_L \label{eq:indicator2}
\end{align}

The constraints in (\ref{eq:indicator1})-(\ref{eq:indicator2}) represent the formula $F$. The bounds $ub^{l}_{x,i}$ are defined by isolating variable $x^{l}_{i}$ from other constraints in subsequent layers. Then, $x^{l}_{i}$ is maximized to find its upper bound. A similar process is applied to find the bounds $ub^{l}_{s,i}$ for variables $s^{l}_{i}$. This optimization is possible due to the bounds of the features. Furthermore, these bounds can assist the solver in accelerating the computation of explanations. Therefore, the time required for this process must be considered when building $F$.

To check the unsatisfiability of the expression $C \wedge F \wedge \neg E$, we still need to take into account the formula $\neg E$, referring to the prediction of the ANN. Given an input $C$ predicted as class $c_i$ by the ANN, formula $E$ must be equivalent to $\bigwedge_{j=1, j \neq i}^{\mathcal{N}} o_i > o_j$. This formula asserts that the maximum value of the last layer is in output $o_i$. Therefore, $\neg E$ must ensure that $\bigvee_{j=1, j \neq i}^{\mathcal{N}} o_i \leq o_j$. Since MILP solvers can not directly represent disjunctions, we use implications (\ref{eq:inputindicator2}) and a linear constraint (\ref{eq:inputindicator3}) over binary variables to define $\neg E$.
\begin{align}
    &q_j = 1 \rightarrow o_i \leq o_j, \quad j \in \{1, ..., \mathcal{N}\} \setminus \{i\} \label{eq:inputindicator2} \\
    &\sum_{j=1, j \neq i}^{\mathcal{N}} q_j \geq 1 \label{eq:inputindicator3}\\
    &q_{j} \in \{0, 1\}, \quad j \in \{1, ..., \mathcal{N}\} \setminus \{i\} \label{eq:inputindicator4}
\end{align}

If an assignment $\mathcal{A}$ satisfies $\bigvee_{j=1, j \neq i}^{\mathcal{N}} o_i \leq o_j$, then $o_i \leq o_j$ is true under $\mathcal{A}$ for some $j$. Therefore, the assignment $\mathcal{A} \cup \{q_j \mapsto 1\}$ satisfies the Equations (\ref{eq:inputindicator2})-(\ref{eq:inputindicator4}). Conversely, if an assignment $\mathcal{A}$ satisfies Equations (\ref{eq:inputindicator2})-(\ref{eq:inputindicator4}), it clearly also satisfies $\bigvee_{j=1, j \neq i}^{\mathcal{N}} o_i \leq o_j$. We conclude this section with a proposition regarding the number of variables and constraints of the encoding discussed above.

\begin{proposition}
Let $C$ be an instance predicted as a class $c \in \mathcal{K}$ by an ANN, the formula $C \wedge F \wedge \neg E$ has $n_0 + n_L + 2\sum_{l=1}^{n_{L-1}} n_l$ real variables and $n_L - 1 + \sum_{l=1}^{n_{L-1}} n_l$ binary variables. Also, the formula has $n_0 + 2n_L + 5\sum_{l=1}^{n_{L-1}}$ constraints.

\end{proposition}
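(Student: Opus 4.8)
The plan is to partition the formula $C \wedge F \wedge \neg E$ into its syntactic blocks and to count, block by block, the distinct variables it introduces (separated by type) and the number of constraints it contributes, taking care that a variable shared between two blocks is attributed to exactly one of them. Here $F$ consists of the constraints in Equations~(\ref{eq:indicator1})--(\ref{eq:indicator2}), $\neg E$ consists of the constraints in Equations~(\ref{eq:inputindicator2})--(\ref{eq:inputindicator4}), and $C$ is a conjunction of equalities over the input variables, which are already present in $F$. Throughout I adopt the standard MILP conventions that an integrality declaration such as $z^{l}_{i} \in \{0,1\}$ is not itself a constraint, that a two-sided bound such as $0 \le x^{l}_{i} \le ub^{l}_{x,i}$ counts as a single constraint, and that fixing an input through $x_i = v_i$ only tightens the existing bound on $x_i$ rather than adding a new constraint.

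First I would count the variables. Block~(\ref{eq:input}) introduces the $n_0$ real variables $x_1,\dots,x_{n_0}$ (the inputs $x^{0}_{i}$). The hidden-layer block~(\ref{eq:indicator1}) introduces, for each of the $\sum_{l=1}^{L-1} n_l$ hidden neurons, two fresh real variables $x^{l}_{i}$, $s^{l}_{i}$ and one fresh binary variable $z^{l}_{i}$, giving $2\sum_{l=1}^{L-1} n_l$ reals and $\sum_{l=1}^{L-1} n_l$ binaries. Block~(\ref{eq:indicator2}) introduces the $n_L$ real output variables $o_i$, while $\neg E$ introduces the $n_L-1$ binary variables $q_j$ (one for each $j \neq i$) and no new reals, and $C$ introduces nothing new. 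Adding the reals yields $n_0 + 2\sum_{l=1}^{L-1} n_l + n_L$ and adding the binaries yields $\sum_{l=1}^{L-1} n_l + n_L - 1$, matching the statement.

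Next I would count the constraints. Each hidden neuron in~(\ref{eq:indicator1}) contributes exactly five: one linear equation, the two indicator implications, and the two two-sided bounds on $x^{l}_{i}$ and $s^{l}_{i}$; over all hidden neurons this is $5\sum_{l=1}^{L-1} n_l$. Block~(\ref{eq:input}) contributes $n_0$ bound constraints and block~(\ref{eq:indicator2}) contributes $n_L$ linear equations. Within $\neg E$, Equations~(\ref{eq:inputindicator2}) give $n_L-1$ indicator implications and~(\ref{eq:inputindicator3}) gives one further linear inequality, for a subtotal of $n_L$, while $C$ contributes nothing by the convention above. Summing the four contributions gives $5\sum_{l=1}^{L-1} n_l + n_0 + n_L + n_L = n_0 + 2n_L + 5\sum_{l=1}^{L-1} n_l$, as required.

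The argument is purely bookkeeping, so the only real obstacle is fixing the counting conventions and applying them consistently: deciding whether indicator implications, two-sided bounds, and integrality declarations each count as one constraint, zero, or two, and making sure that the input variables $x^{0}_{i}$ (shared by $C$ and $F$) and the outputs $o_i$ (shared by $F$ and $\neg E$) are each counted only once. Under the conventions stated above the totals come out exactly as in the proposition, so no further work is needed.
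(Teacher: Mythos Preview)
The paper states this proposition without proof, so there is no argument of its own to compare against; your block-by-block bookkeeping is the natural (and essentially the only) way to establish the claim, and your counts are correct.

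Two minor remarks. First, the proposition as printed carries two evident typos---the upper summation limit $n_{L-1}$ should read $L-1$, and the last sum is missing its summand $n_l$---and you silently corrected both, which is appropriate. Second, your convention that the equalities $x_i=v_i$ in $C$ tighten the existing input bounds from~(\ref{eq:input}) rather than add $n_0$ fresh constraints is precisely what is needed for the stated total $n_0 + 2n_L + 5\sum_{l=1}^{L-1} n_l$ to come out right; an equally defensible reading is that for each input exactly one constraint is active (either the bound from $F$ or the equality from $C$), and either reading yields the same count.
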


\section{Explanations for ANNs without Implications}\label{encoding}


In this section, we present an adaptation of the encoding proposed by \citet{tjeng} for logic-based explainability. In such a work, the authors originally used the encoding to find adversarial examples without using logical implications. Even more importantly, such an encoding uses fewer variables and constraints compared to \cite{fischetti}. Then, we expect that our adaptation can lead to a better execution time for both building the logical constraints and computing explanations. Adapting the encoding in \cite{tjeng} to the context of computing explanations requires incorporating additional constraints that were not part of the original work. These new constraints represent the class predicted by the ANN as a formula $E$, as seen in Section~\ref{explanations}. However, to maintain the concept of the original encoding without implications, we define these additional constraints without implications.

In the following, we apply the same algorithm from Figure \ref{fig:fluxograma}, but replacing the encoding of $F$ as in \cite{fischetti} with the one in \cite{tjeng}. We encode an ANN with $L+1$ layers as in Equations (\ref{eq:withoutindicator1}), (\ref{eq:input}) and (\ref{eq:indicator2}). The variables $x^{l}_{i}$ and $o_i$ have the same meaning as in Equations (\ref{eq:indicator1})-(\ref{eq:indicator2}). Furthermore, auxiliary variables $s^l_i$ are not required, as observed in the encoding by \citet{fischetti}. Constants $lb^{l}_{i}$ and $ub^{l}_{i}$ are, respectively, the lower and upper bounds of $\sum_{j=1}^{n_{l-1}} w^{l}_{i,j} x^{l-1}_j + b^{l}_{i}$. Again, we find such bounds via a MILP solver. The behavior of $\mathrm{ReLU}$ is modeled using these bounds and binary variables $z^{l}_{i}$. If $z^{l}_{i}$ is equal to $0$, the $\mathrm{ReLU}$ output $x^{l}_{i}$ is $0$. Otherwise, $x^{l}_{i}$ is equal to $\sum_{j=1}^{n_{l-1}} w^{l}_{i,j} x^{l-1}_j + b^{l}_{i}$. The bounds $lb^{l}{i}$ and $ub^{l}{i}$ are necessary to maintain the integrity of the set of constraints for the entire feature space. Regardless of the value of $z^{l}{i}$, the bounds ensure that the constraints remain valid for the entire feature space.
\begin{eqnarray}
     \left.
         \begin{aligned}
         &x^{l}_{i} \leq \sum_{j=1}^{n_{l-1}} w^{l}_{i,j} x^{l-1}_j + b^{l}_{i} - lb^{l}_{i} (1 - z^{l}_{i}) \label{eq:withoutindicator1}\\
         &x^{l}_{i} \geq \sum_{j=1}^{n_{l-1}} w^{l}_{i,j} x^{l-1}_j + b^{l}_{i} \\
         &x^{l}_{i} \leq ub^{l}_{i} z^{l}_{i} \\
         &z^{l}_{i} \in \{0, 1\} \\
         &x^{l}_{i} \geq 0 \\
         \end{aligned}
     \right\}l = 1, ..., L-1, \ i = 1, ..., n_l
\end{eqnarray}



In our proposal for computing explanations, constraints in Equations (\ref{eq:withoutindicator1}), (\ref{eq:input}) and (\ref{eq:indicator2}) represent the formula $F$. As in Section~\ref{explanations}, an instance is a conjunction $C$, and the associated prediction by the ANN is a formula $E$. Given an input $C$ predicted as class $c_i$ by the ANN, again formula $\neg E$ must ensure that $\bigvee_{j=1, j \neq i}^{\mathcal{N}} o_i \leq o_j$. Therefore, we must add new constraints to represent $\neg E$. Maintaining the concept of the original encoding in \cite{tjeng} without implications, we define these additional constraints accordingly. We employ binary variables $q_j$ and the upper and lower bounds $ub_j$ and $lb_j$ of variables $o_j$. As for $lb^{l}_{i}$ and $ub^{l}_{i}$, we find the bounds $ub_j$ and $lb_j$ through a MILP solver. We recall such elements are not originally present in \cite{tjeng}. However, they are necessary for
the context of computing explanations for ANNs. In Equations (\ref{eq:inputwithoutindicator1})-(\ref{eq:inputwithoutindicator2}) we represent our proposal for encoding formula $\neg E$, where the prediction associated with an input $C$ is class $c_i$. 
\begin{align}
    &o_i - o_j \leq (ub_i - lb_j) (1 - q_j) ,  \quad j \in \{1, ..., \mathcal{N}\} \setminus \{i\} \label{eq:inputwithoutindicator1} \\
    &\sum_{j=1, j \neq i}^{\mathcal{N}} q_j \geq 1 \label{eq:sum_at_least_1} \\
    &q_{j} \in \{0, 1\}, \quad j \in \{1, ..., \mathcal{N}\} \setminus \{i\} \label{eq:inputwithoutindicator2}
\end{align}

In what follows, we prove that Equations (\ref{eq:inputwithoutindicator1})-(\ref{eq:inputwithoutindicator2}) correctly ensure that $\bigvee_{j=1, j \neq i}^{\mathcal{N}} o_i \leq o_j$.

\begin{proposition}
Let $\neg E$ be defined as in Equations \emph{(\ref{eq:inputwithoutindicator1})-(\ref{eq:inputwithoutindicator2})}. Let $i \in \{1, ..., \mathcal{N}\}$ be fixed and $ub_i$ be such that $o_i \leq ub_i$. Let $lb_j$ be such that $lb_j \leq o_j$, for $j \in \{1, ..., \mathcal{N}\} \setminus \{i\}$. Therefore,
\[
\neg E \text{ is satisfiable iff } \bigvee_{j=1, j \neq i}^{\mathcal{N}} o_i \leq o_j \text{ is satisfiable.}
\]
\end{proposition}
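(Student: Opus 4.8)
The statement is an "iff" between two satisfiability claims, so I would prove the two implications separately, and in both directions the argument is essentially a big-$M$-style case analysis on the binary variables $q_j$. The bounds hypotheses $o_i \leq ub_i$ and $lb_j \leq o_j$ are exactly what make the "inactive" instances of constraint (\ref{eq:inputwithoutindicator1}) vacuously true, so they will be used only in the backward direction.

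For the forward direction, assume $\neg E$ is satisfiable and let $\mathcal{A}$ be a satisfying assignment of Equations (\ref{eq:inputwithoutindicator1})--(\ref{eq:inputwithoutindicator2}). By (\ref{eq:sum_at_least_1}) together with (\ref{eq:inputwithoutindicator2}), we have $\sum_{j \neq i} q_j \geq 1$ with each $q_j \in \{0,1\}$, hence $q_k = 1$ for some $k \in \{1,\dots,\mathcal{N}\}\setminus\{i\}$. Instantiating (\ref{eq:inputwithoutindicator1}) at $j = k$ gives $o_i - o_k \leq (ub_i - lb_k)(1 - q_k) = 0$, so $o_i \leq o_k$ under $\mathcal{A}$. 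Therefore the restriction of $\mathcal{A}$ to the variables $o_1,\dots,o_{\mathcal{N}}$ satisfies $\bigvee_{j=1, j \neq i}^{\mathcal{N}} o_i \leq o_j$.

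For the backward direction, assume $\bigvee_{j=1, j \neq i}^{\mathcal{N}} o_i \leq o_j$ is satisfiable and let $\mathcal{A}$ be a satisfying assignment; pick $k \neq i$ with $o_i \leq o_k$ under $\mathcal{A}$. Extend $\mathcal{A}$ to the binary variables by setting $q_k = 1$ and $q_j = 0$ for every $j \in \{1,\dots,\mathcal{N}\}\setminus\{i,k\}$; then (\ref{eq:sum_at_least_1}) and (\ref{eq:inputwithoutindicator2}) hold by construction. It remains to check each instance of (\ref{eq:inputwithoutindicator1}). For $j = k$ the right-hand side is $0$ and the inequality reads $o_i - o_k \leq 0$, which holds by choice of $k$. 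For $j \neq k$ we have $q_j = 0$, so the right-hand side is $ub_i - lb_j$, and using the hypotheses $o_i \leq ub_i$ and $lb_j \leq o_j$ we get $o_i - o_j \leq ub_i - o_j \leq ub_i - lb_j$, as required. Hence the extended assignment satisfies $\neg E$.

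The only place requiring care is this last chain of inequalities in the backward direction: it is where the choice of $ub_i$ and $lb_j$ as valid bounds on $o_i$ and $o_j$ is essential, and it is the analogue of the correctness argument for a big-$M$ reformulation of the disjunction. Everything else is a direct unfolding of the definitions, so I do not expect any genuine obstacle beyond bookkeeping.
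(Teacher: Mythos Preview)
Your proof is correct and follows essentially the same approach as the paper's: both directions are handled by the same big-$M$ case analysis on the $q_j$'s, extending the assignment with a single $q_k = 1$ in the backward direction and reading off $o_i \leq o_k$ from the active constraint in the forward direction. Your write-up is in fact slightly more explicit than the paper's in spelling out the chain $o_i - o_j \leq ub_i - o_j \leq ub_i - lb_j$ for the inactive constraints.
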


\begin{proof}
If an assignment $\mathcal{A}$ satisfies $\bigvee_{j=1, j \neq i}^{\mathcal{N}} o_i \leq o_j$, then $o_i \leq o_j'$ is true under $\mathcal{A}$ for $j' \neq i$. Let $\mathcal{A}' = \mathcal{A} \cup \{q_{j} \mapsto v \mid v=1 \text{ if } j=j', \text{ else } v=0, \text{ for } j \in\{1, ..., \mathcal{N}\} \setminus \{i\}\}$ be an assignment. Then, $\mathcal{A}'$ imposes that $o_i \leq o_j'$ in Equation~(\ref{eq:inputwithoutindicator1}) for $j = j'$, which in clearly true under this assignment. For $j \neq j'$, it follows that $o_i - o_j \leq (ub_i - lb_j)$ must hold, which is also true under $\mathcal{A}'$ since $lb_j \leq o_j$ and $o_i \leq ub_i$.

Conversely, if an assignment $\mathcal{A}$ satisfies Equations (\ref{eq:inputwithoutindicator1})-(\ref{eq:inputwithoutindicator2}), it satisfies some $q_j'$, for $j' \neq i$ by Equation~(\ref{eq:sum_at_least_1}). Moreover, $\mathcal{A}$ satisfies $o_i \leq o_{j'}$ by Equation~(\ref{eq:inputwithoutindicator1}), for $j = j'$. 
Therefore, $\mathcal{A}$ also satisfies $\bigvee_{j=1, j \neq i}^{\mathcal{N}} o_i \leq o_j$.
\end{proof}

Finally, we give a proposition on the number of variables and constraints in our adaptation of the encoding in \cite{tjeng}. 

\begin{proposition}
Let $C$ be an instance predicted as a class $c \in \mathcal{K}$ by an ANN, then formula $C \wedge F \wedge \neg E$ has $n_0 + n_L + \sum_{l=1}^{n_{L-1}} n_l$ real variables and $n_L - 1 + \sum_{l=1}^{n_{L-1}} n_l$ binary variables. Moreover, the formula has $n_0 + 2n_L + 4\sum_{l=1}^{n_{L-1}} n_l$ constraints.
\end{proposition}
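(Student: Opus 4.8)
The plan is to count variables and constraints by carefully summing the contributions of each group of equations that make up the formula $C \wedge F \wedge \neg E$, just as one would verify the analogous Proposition for the encoding of Section~\ref{explanations}. First I would classify the variables. The real variables are: the $n_0$ input variables $x^0_i$ appearing in Equation~(\ref{eq:input}); the $n_L$ output variables $o_i$ from Equation~(\ref{eq:indicator2}); and the hidden-layer neuron variables $x^l_i$ for $l=1,\dots,L-1$ and $i=1,\dots,n_l$, which contribute $\sum_{l=1}^{L-1} n_l$, written in the statement as $\sum_{l=1}^{n_{L-1}} n_l$ (I would follow the paper's index convention here). Crucially, unlike the Fischetti-style encoding, there are no auxiliary $s^l_i$ variables, which is exactly where the saving of one copy of $\sum_{l=1}^{n_{L-1}} n_l$ comes from relative to the previous Proposition. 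The binary variables are the $z^l_i$ from Equation~(\ref{eq:withoutindicator1}), contributing $\sum_{l=1}^{n_{L-1}} n_l$, together with the $q_j$ from Equations~(\ref{eq:inputwithoutindicator1})--(\ref{eq:inputwithoutindicator2}), which range over $\{1,\dots,\mathcal{N}\}\setminus\{i\}$ and hence number $n_L - 1$. Summing gives the claimed $n_0 + n_L + \sum_{l=1}^{n_{L-1}} n_l$ real variables and $n_L - 1 + \sum_{l=1}^{n_{L-1}} n_l$ binary variables.

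Next I would count the constraints. The domain constraints $C$ (the instance $\{x_i = v_i\}$) and the feature-space bounds in Equation~(\ref{eq:input}) together should be accounted for consistently with how the paper counted them in the first Proposition; the natural reading is that Equation~(\ref{eq:input}) contributes $n_0$ constraints (each double inequality $l_i \le x_i \le u_i$ counted as one). Equation~(\ref{eq:indicator2}) contributes $n_L$ linear equations for the outputs. For each hidden neuron, Equation~(\ref{eq:withoutindicator1}) lists five items, but two of them — $z^l_i \in \{0,1\}$ and $x^l_i \ge 0$ — are typically folded into variable declarations/bounds rather than counted as separate constraints, leaving the three genuine linear constraints (the upper bound via $lb^l_i$, the lower bound, and the $ub^l_i z^l_i$ cap) for a total of $3\sum_{l=1}^{n_{L-1}} n_l$. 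The encoding of $\neg E$ adds, from Equations~(\ref{eq:inputwithoutindicator1}) and (\ref{eq:sum_at_least_1}), one inequality for each $j \neq i$ (that is $n_L - 1$) plus the single cardinality constraint $\sum_{j \neq i} q_j \ge 1$, giving $n_L$ in total (again the $q_j \in \{0,1\}$ are not counted). Adding these, $n_0 + n_L + 3\sum_{l=1}^{n_{L-1}} n_l + n_L$, leaves a discrepancy with the stated $n_0 + 2n_L + 4\sum_{l=1}^{n_{L-1}} n_l$: it is off by $\sum_{l=1}^{n_{L-1}} n_l$.

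Resolving that bookkeeping is the main obstacle, and the fix is to pin down the paper's own counting convention. The discrepancy is exactly one per hidden neuron, so I would attribute it to counting $x^l_i \ge 0$ (the fourth non-trivial line of Equation~(\ref{eq:withoutindicator1}), not absorbable because $x^l_i$ is not simply a ReLU-free variable — it needs the explicit nonnegativity) as a bona fide constraint; this matches the spirit of the first Proposition, which counted $5\sum_{l=1}^{n_{L-1}} n_l$ from the six-line block (\ref{eq:indicator1}), i.e. it treated only $z^l_i \in \{0,1\}$ as a non-constraint and counted the $0 \le x^l_i \le ub^l_{x,i}$ and $0 \le s^l_i \le ub^l_{s,i}$ lines. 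Applying the same rule to (\ref{eq:withoutindicator1}) — count the upper bound, the lower bound, the $ub^l_i z^l_i$ cap, and the $x^l_i \ge 0$ line, but not $z^l_i \in \{0,1\}$ — gives $4\sum_{l=1}^{n_{L-1}} n_l$, and then the grand total is $n_0 + n_L + 4\sum_{l=1}^{n_{L-1}} n_l + n_L = n_0 + 2n_L + 4\sum_{l=1}^{n_{L-1}} n_l$, as claimed. So the proof is: enumerate each equation block, apply the convention "every listed line is a constraint except the integrality declarations $z \in \{0,1\}$ and $q \in \{0,1\}$", and add up; I would present this enumeration in a short table or an inline tally, and conclude by noting the side-by-side comparison with the previous Proposition, which makes transparent that the new encoding trades $\sum_{l=1}^{n_{L-1}} n_l$ real variables and $\sum_{l=1}^{n_{L-1}} n_l$ constraints for nothing in return, exactly the efficiency gain the paper advertises.
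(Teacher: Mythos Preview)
The paper states this proposition without proof, so there is no argument to compare against; your counting is exactly the kind of verification the reader is implicitly invited to do. Your enumeration of real and binary variables is correct, and your resolution of the constraint tally is right: once you adopt the convention ``every displayed line in the block is a constraint except the integrality declarations $z^l_i\in\{0,1\}$ and $q_j\in\{0,1\}$'' --- which is forced on you by reverse-engineering the $5\sum_{l=1}^{L-1} n_l$ term in the earlier Proposition --- the four lines of~(\ref{eq:withoutindicator1}) per hidden neuron, the $n_0$ bounds of~(\ref{eq:input}), the $n_L$ equalities of~(\ref{eq:indicator2}), and the $n_L$ lines from~(\ref{eq:inputwithoutindicator1})--(\ref{eq:sum_at_least_1}) add up to the stated total.

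One small presentational point: the detour where you first compute $3\sum$ and then discover a discrepancy is useful as private scratch-work, but in a written proof it is cleaner to state the convention up front (justified by consistency with the previous Proposition) and then tally directly. You might also flag, as you do in passing, that the paper's upper summation index $n_{L-1}$ is evidently a typo for $L-1$, and that the $n_0$ contribution is being attributed either to $C$ or to~(\ref{eq:input}) but not both, since fixing $x_i=v_i$ subsumes the domain bounds.
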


Therefore, this encoding has $\sum_{l=1}^{n_{L-1}} n_l$ fewer variables than the one presented in Section~\ref{explanations}. Additionally, this encoding has $\sum_{l=1}^{n_{L-1}} n_l$ fewer constraints. Consequently, one would expect a reduction in running time for both building the logical constraints and computing explanations.%

\section{Experiments}\label{experiments}


In this section, we detail the experiments conducted to compare our proposal against the encoding presented in \cite{fischetti}. Our evaluation consists of two main experiments. In the first one, we compare the two encodings using $12$ datasets. In the second one, we conduct a detailed comparison using a single dataset. We vary the architecture of the trained ANNs to explore the effect of the number of layers and neurons. We evaluate the performance of each encoding in terms of time for building logical constraints and time for computing explanations. We explained all instances in a given dataset and calculated the average time and standard deviation to compare times for computing explanations. In the other case, given a trained ANN on a dataset, we built the logical constraints $10$ times and calculated the average time and standard deviation. 


Next, we present the experimental setup, describing technologies, datasets, the trained ANNs and hyperparameters. After that, we discuss the results providing a comparative analysis of the running times for building logical constraints and computing explanations. Finally, we highlight specific improvements observed in our proposal.


\subsection{Experimental Setup}

We used Python to implement the approaches and to run the experiments. TensorFlow was used to manipulate ANNs, including the training and testing steps. CPLEX was used as the MILP solver and accessed by the DOcplex library.

We used 12 datasets from the UCI Machine Learning Repository\footnote{\url{https://archive.ics.uci.edu/ml/}} and Penn Machine Learning Benchmarks\footnote{\url{https://github.com/EpistasisLab/penn-ml-benchmarks/}}, each ranging from 9 to 32 integer, continuous, categorical or binary features. The number of instances in the selected
datasets ranges from 156 to 691. The types of classification problems related to these datasets are binary and multi-class classification. The preprocessing performed on the datasets included one-hot encoding of the categorical data and normalization of the continuous features to the range $[0, 1]$. This normalization was not applied to the integer features to avoid transforming their space into continuous, which could compromise formal guarantees on the correctness of the algorithm. As far as we know, such a methodology was not considered in earlier works.

The ANNs training was accomplished using a batch size of $4$ and a maximum of $100$ epochs, applying early stopping regularization with $10$ epochs based on validation loss. The optimization algorithm used was Adam and the learning rate was $0.001$. The datasets split was $80\%$ for training and $20\%$ for validation. The ANN architectures were limited to $2$ layers to reduce the total running time, because many solver calls were performed in the experiments due to the large number of instances. Each solver call deals with an NP-complete problem, therefore, impacting the experiments running time.


The first experiment compared the two encodings presented on the 12 datasets. The architecture of the trained ANNs is two hidden layers with $20$ neurons each. For each dataset and the associated ANN, the explanation of each instance was obtained using the Algorithm \ref{algorithm1} and both presented encodings. The second experiment performed the comparison of the two encodings presented using the voting dataset of the first experiment. This experiment was conducted in two cases. In the first case, the trained ANNs consist of one hidden layer with the number of neurons ranging from $10$ to $100$. In the second case, the ANNs consist of two hidden layers such that both layers contains the same number of neurons, ranging from $10$ to $40$. In both cases, the number of neurons in the layers increases in increments of $5$. Again, the explanations of each instance was obtained using Algorithm \ref{algorithm1} and both presented encodings. The objective of this experiment is to verify the influence of the number of layers and the number of neurons in both encodings.

\subsection{Results}

\begin{table}[t]
    \centering
    \setlength{\tabcolsep}{8pt}
    \resizebox{!}{2.7cm}{
    \begin{tabular}{cccccc}
    \hline
    \multirow{2}{2cm}{\textbf{Datasets}} &
    \multicolumn{2}{c}{\textbf{\citet{fischetti}}} & \multicolumn{2}{c}{\textbf{Our Proposal}}\\
    \cline{2-5}
    &\textbf{Exp (s)} & \textbf{Build (s)} & \textbf{Exp (s)} & \textbf{Build (s)}\\
    \hline
    
    breast-cancer (9) & 0.39 \textpm 0.24 & 2.98 \textpm 0.05 & 0.39 \textpm 0.21 & \textbf{2.67 \textpm 0.17} \\ 
    \hline
    
    glass (9) & 0.25 \textpm 0.09 & 4.25 \textpm 0.18 & 0.26 \textpm 0.1 & \textbf{3.86 \textpm 0.09} \\ 
    \hline
    
    glass2 (9) & 0.39 \textpm 0.29 & 3.45 \textpm 0.1 & 0.47 \textpm 0.34 & \textbf{3.02 \textpm 0.02} \\
    \hline
    
    cleve (13) & 0.51 \textpm 0.23 & 4.26 \textpm 0.14 & 0.54 \textpm 0.25 & \textbf{3.55 \textpm 0.01}\\
    \hline
    
    cleveland (13) & 0.69 \textpm 0.8 & 5.28 \textpm 0.11 & 0.82 \textpm 1.03 & \textbf{4.57 \textpm 0.17} \\
    \hline
    
    heart-statlog (13) & 0.41 \textpm 0.18 & 3.76 \textpm 0.05 & 0.41 \textpm 0.18 & \textbf{3.06 \textpm 0.02}\\
    \hline
    
    australian (14) & 0.76 \textpm 0.43 & 3.2 \textpm 0.08 & 0.84 \textpm 0.42 & 3.14 \textpm 0.72\\
    \hline
    
    voting (16) & 1.02 \textpm 0.52 & 3.62 \textpm 0.07 & 1 \textpm 0.47 & \textbf{3.01 \textpm 0.02} \\
    \hline
    
    hepatitis (19) & 1.17 \textpm 1.27 & 5.66 \textpm 0.14 & 1.1 \textpm 1.13 & \textbf{5.28 \textpm 0.07} \\
    \hline
    
    spect (22) & 2.64 \textpm 1.56 & \textbf{4.81 \textpm 0.13} & 3.21 \textpm 1.82 & 5.74 \textpm 0.15 \\
    \hline
    
    auto (25) & 0.79 \textpm 0.3 & 6.86 \textpm 0.25 & 0.83 \textpm 0.32 & 6.62 \textpm 0.14 \\
    \hline
    
    backache (32) & 11.44 \textpm 10.3 & 5.04 \textpm 0.16 & 11.39 \textpm 9.74 & \textbf{4.78 \textpm 0.14}\\
    \hline
    
  \end{tabular}%
  }
    \caption{Comparison of both encodings.}
    \label{tab:results1}
\end{table}

The results of the first experiment are shown in Table \ref{tab:results1}. For each dataset, its number of features is indicated in parentheses. The column \textbf{Exp (s)} refers to the average running time for computing explanations in seconds, and the standard deviation is also presented. The column \textbf{Build (s)} refers to the average running time, in seconds, for building the logical constraints of the trained ANN. The running time for finding the bounds of variables is included in the time for building the encodings. %


Despite the encoding by \citet{fischetti} achieved a better average running time for computing explanations in $7$ out of $12$ cases, both encodings generally perform similarly when considering the variability of the time. For instance, in the spect dataset, the average execution time of the encoding by \citet{fischetti} ($2.64$ seconds) falls within the range of the average minus the standard deviation of our approach ($3.21 - 1.81$). This pattern is observed across several datasets. In summary, the results indicate that both encodings generally perform similarly in terms of running time for computing explanations, with minor variations across different datasets. 


With respect to the average running time for building the logical constraints, our adaptation generally outperformed the encoding by \citet{fischetti}. However, there were exceptions noted, such as in the spect dataset. Overall, our adaptation achieved an improvement of up to $18\%$ compared to the the other one, as seen in the heart-statlog dataset. In summary, the results indicate that our adaptation is consistently more efficient than the other approach for building logical constraints. The variability in the results, as shown by the standard deviations, further supports this conclusion. For instance, the average running time of our proposal is less than the average minus the standard deviation of the other approach in $9$ out of $12$ datasets. These cases are highlighted in bold in Table~\ref{tab:results1}. It is important to note that, in the spect dataset, the average running time of the encoding proposed by \citet{fischetti} is less than the average minus the standard deviation of our adaptation. This indicates that, while our proposal appears generally more efficient, specific dataset characteristics can influence which encoding is more advantageous. Our adaptation yielded notably superior results not only in the average time for building logical constraints but also in the overall time, which includes both computing explanations and constructing logical constraints. For instance, it achieved an improvement of up to $16\%$ compared to the other approach, as seen in the heart-statlog dataset. 


\begin{figure}[t]
\makebox[\textwidth][c]{\includegraphics[width=1.2\textwidth]{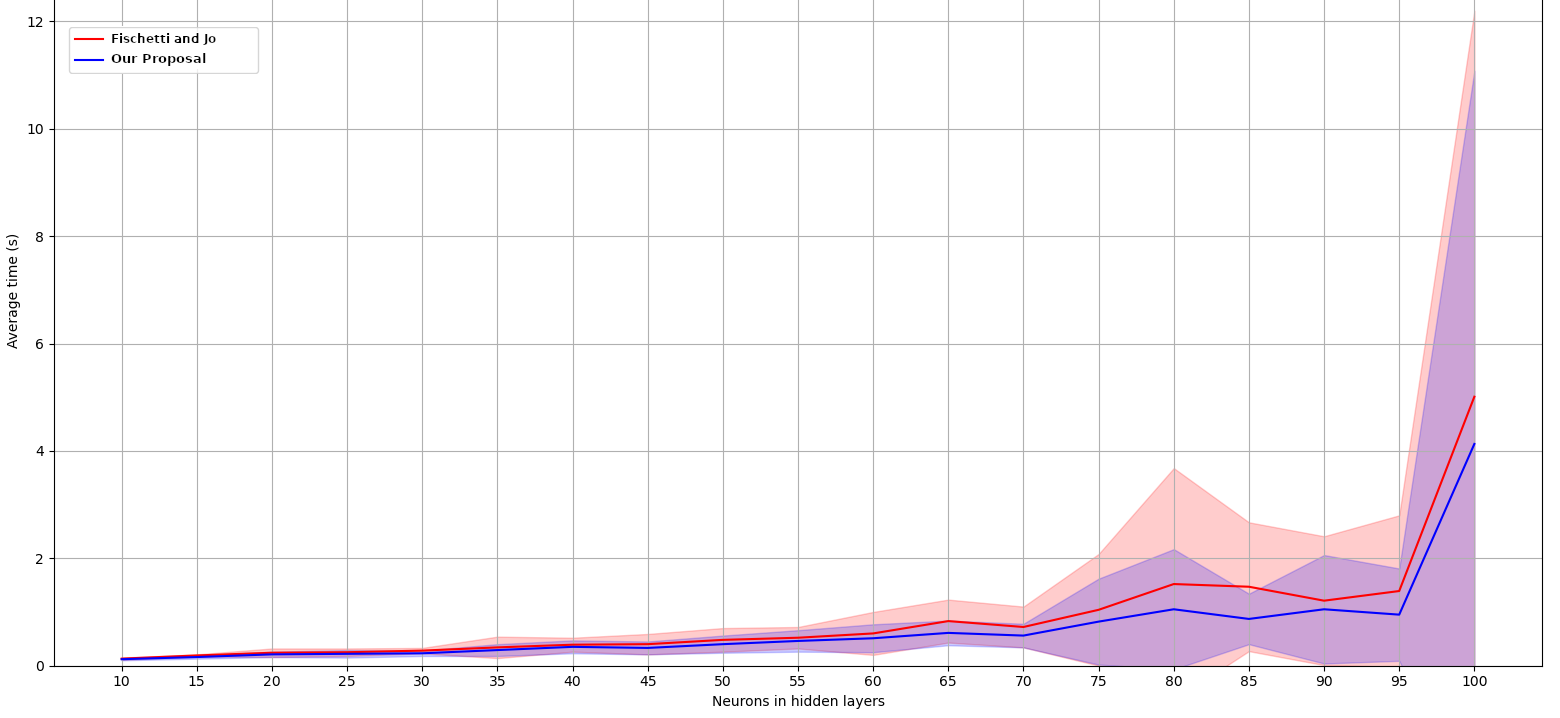}}%
\caption{Comparison of average running time for computing explanations, using ANNs with one hidden layer and the voting dataset.}\label{fig:time_voting}
\end{figure}

\begin{figure}[t]
\makebox[\textwidth][c]{\includegraphics[width=1.2\textwidth]{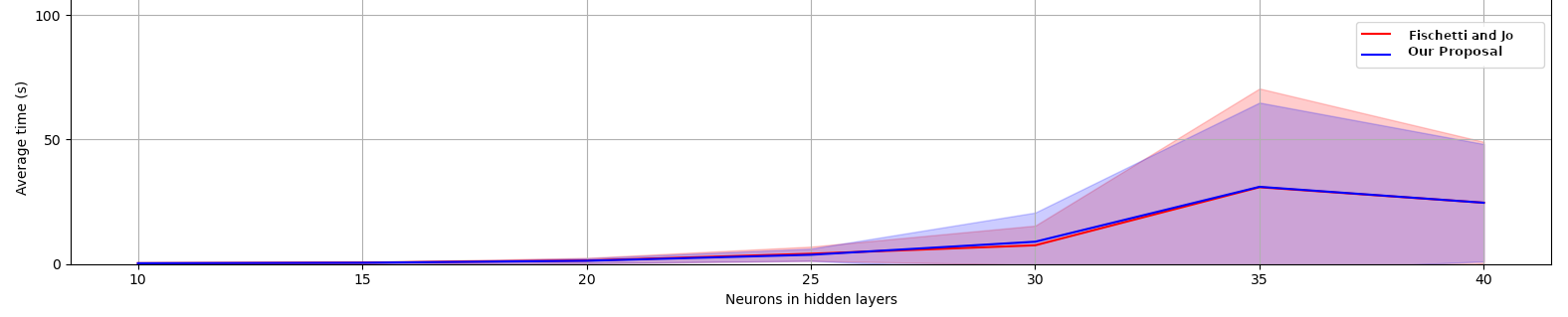}}%
\caption{Comparison of average running time for computing explanations, using ANNs with two hidden layers and the voting dataset.}\label{fig:time_voting_2layers}
\end{figure}

The results of the second experiment are shown in Figures \ref{fig:time_voting} and \ref{fig:time_voting_2layers}. The $x$-axis shows the number of neurons in each hidden layer in both Figures. While the $y$-axis shows the average running time for computing explanations. Furthermore, the standard deviation is indicated by the shaded region in both Figures. Figure \ref{fig:time_voting} refers to ANNs with one hidden layer, while Figure \ref{fig:time_voting_2layers} refers to ANNs with two hidden layers.


Figure \ref{fig:time_voting} suggests that our proposal, for ANNs with one hidden layer, achieved a superior average running time for computing explanations in the voting dataset. This encoding outperforms the other approach with percentage improvements ranging from approximately $7.69\%$ to $40.82\%$. The most significant improvements are observed with higher neuron counts. For instance, with $85$ neurons per layer, our proposal shows an improvement of around $40.82\%$, and with $95$ neurons, the improvement is about $31.65\%$. Moreover, our proposal generally exhibits similar or lower standard deviations, indicating more consistent performance across different number of neurons.

On the other hand, Figure \ref{fig:time_voting_2layers} depicts comparable results between both encodings for ANNs with two hidden layers. Moreover, similar standard deviations were achieved for both encodings, indicating comparable levels of consistency in performance. The variability increases significantly with more complex networks. For example, with $35$ and $40$ neurons per layer, both encodings exhibit significant variability. Furthermore, with $30$ neurons per layer, our adaptation shows higher variability compared to the other approach.
%

\section{Conclusions and future work}\label{conclusions}

Explanations for the outputs of ANNs are fundamental in many scenarios, due to critical systems, data protection laws, adversarial examples, among others. Therefore, several heuristic methods have been developed to provide explanations for the decisions made by ANNs. However, these approaches lack guarantees of correctness and may also produce redundant explanations. Logic-based approaches address these issues but often suffer from scalability problems.


In this work, we compare two logical encodings of ANNs: 
one has been used in the literature to provide explanations \cite{fischetti, Ignatiev_abduction}, and another \cite{tjeng} that we have adapted for our context of explainability. Our experiments indicate that both encodings have similar running times for computing explanations, even as the number of neurons and layers increases. Furthermore, our experimental results suggest that our adaptation is generally more efficient for ANNs with one hidden layer, while the performance advantage diminishes for ANNs with two hidden layers. However, our proposal achieved a better running time for building the encoding for ANNs with two layers, showing an improvement of up to $18\%$. This can help to decrease the scalability issue for building the logical constraints given an ANN. Furthermore, this encoding obtained better results also in the overall time, i.e., the time for computing explanations plus the time for building the logical constraints, showing an improvement of up to $16\%$.


In the experiments of this work, we considered all instances of the datasets used, which considerably increased the experiments running time. As future work, we can change the design of experiments, using only a subset of the datasets, to allow the use of larger ANNs. More experiments are necessary, especially with additional layers and neurons, to further validate our findings and understand the performance of these encodings. Furthermore, others encodings \cite{reluplex,katz} can be evaluated for computing logic-based minimal explanations for ANNs. Moreover, in order to improve the scalability of computing logic-based explanations, the ANNs can be simplified, before or during building their encodings, via pruning or slicing as proposed by \cite{pruning}. This results in equivalent ANNs with smaller sizes.

\bibliographystyle{splncs04nat}%
\bibliography{references}

\end{document}